\newcommand{\HKRnextcol}{\ensuremath{\operatorname{nextcol}}}
\newcommand{\HKRnextrow}{\ensuremath{\operatorname{nextrow}}}
\newtheorem{theorem}{Theorem}
\newtheorem{lemma}{Lemma}
\newtheorem{proposition}{Proposition}
\newtheorem{corollary}{Corollary}
\newenvironment{proof}
{\begin{trivlist}\item[]{{\sc
Proof.}}}{\hfill{$\square$}\noindent\end{trivlist}
}
\begin{document}

\title{The Maximum Scatter TSP on a Regular Grid}%/An exact method for the Maximum Scatter Problem in special cases}%Contribution Title}
\author{Isabella Hoffmann, Sascha Kurz, and J\"org Rambau\\
Universit\"at Bayreuth, 95440 Bayreuth\\
\texttt{firstname.lastname@uni-bayreuth.de}}
\maketitle
\abstract{In the maximum scatter traveling salesman problem the
  objective is to find a tour that maximizes the shortest distance
  between any two consecutive nodes. This model can be applied to
  manufacturing processes, particularly laser melting processes.  We
  extend an algorithm by Arkin et al. that yields optimal solutions
  for nodes on a line to a regular $m \times n$-grid. The new algorithm
  $\textsc{Weave}(m,n)$ takes linear time to compute an optimal tour in
  some cases. It is asymptotically optimal and a $\frac{\sqrt{10}}{5}$-approximation
   for the $3\times 4$-grid, which is the worst case.}

\section{Introduction}
\label{sec:1}
%Introduction 
The Maximum Scatter Traveling Salesman Problem (Maximum Scatter TSP)
asks for a closed tour in a graph visiting each node exactly once so
that its shortest edge is as long as possible. Applications,
approximation strategies, and exact algorithms for the geometric
version where all vertices and edges are on a line or on a circle were
presented by Arkin et al.~\cite{ACM}. Moreover, they proved that the problem 
is NP-complete in general. Chiang~\cite{Chiang} focuses on the max-min 
$m$-neighbor TSP (particularly with $m=2$), which we do not consider here. 
It is not known whether the geometric Maximum Scatter Problem for points in the 
plane is NP-hard. In this paper we extend their strategy on the line to the 
geometric Maximum Scatter TSP in the plane where all points are located on a 
rectangular equidistant grid with $m$ rows and $n$ columns. More formally, a
regular rectangular grid is defined as the complete graph $G(m,n)$ on
the vertex set
$V(m,n)=\{\tbinom{x}{y} \in \mathbb{Z}^2|1\leq x\leq n,1\leq y \leq
m\}$.
We call the problem $\textsc{MSTSP}(m,n)$ and our new algorithm
$\textsc{Weave}(m,n)$. 

All current results are summarized in Theorem \ref{thm:main_thm}. 

\begin{theorem}
  Let $m\leq n$ be natural numbers, and let
  $k =\bigl\lfloor\frac{n}{2}\bigr\rfloor$ and
  $t =\bigl\lfloor\frac{m}{2}\bigr\rfloor$. Then there is a
  linear-time algorithm $\textsc{Weave}(m,n)$ that specifies a
  feasible tour for $\textsc{MSTSP}(m,n)$ that is
  \begin{enumerate}[(i)]
  \item optimal whenever $n$ is odd, $m=n$, or $m=2$;
  \item a $(\sqrt{1-\frac{2(k-t)}{(k-t)^2-2t(k-1)+1}})$-approximation
    whenever $m$ and $n$ are even;
  \item a $(\sqrt{1-\frac{2k-1}{k^2+t^2}})$-approximation whenever $n$
    is even and $m$ is odd.
  \end{enumerate}
  \label{thm:main_thm}
\end{theorem}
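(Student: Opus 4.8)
The plan is to separate the construction from its analysis. First I would pin down exactly which tour $\textsc{Weave}(m,n)$ produces. Guided by the one-dimensional algorithm of Arkin et al., the natural construction scans the grid column by column and, from a vertex $\binom{x}{y}$, jumps to a vertex roughly $k=\lfloor n/2\rfloor$ columns away while simultaneously advancing the row index by about $t=\lfloor m/2\rfloor$, so that every edge has one of only a constant number of ``shapes''; the four parity classes of $(m,n)$ differ only in how the one leftover half-column and half-row are threaded and in how the two ends are joined into a closed cycle. The first genuine obligation is to prove that the edge set thus produced is a single Hamiltonian cycle rather than a disjoint union of cycles. I would do this either by induction, peeling off the two outermost columns to reduce an $m\times n$ instance to an $m\times(n-2)$ one, or by an explicit description of the cyclic order of the vertices in each of the four cases.

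Second, with the pattern fixed, bounding the shortest edge of the Weave tour from below is a finite computation: each edge type has squared length a fixed quadratic in $k$ and $t$, and minimizing over types yields an explicit value $\ell_{\mathrm W}(m,n)^2$. In the odd-$n$, square, and $m=2$ cases this value will turn out to equal $k^2+t^2$ (or the appropriate degenerate analogue), and in the two remaining cases it will equal $k^2+t^2$ decreased by the correction governed by $2k-1$, respectively by $(k-t)^2-2t(k-1)+1$, that appears under the radicals.

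Third, for the upper bound on the optimum I would use the threshold-graph reformulation: a feasible tour with bottleneck at least $L$ is precisely a Hamiltonian cycle of the graph $G_L$ on $V(m,n)$ whose edges join points at Euclidean distance $\ge L$. I would then invoke the standard necessary condition that a Hamiltonian graph satisfies $c(G_L-S)\le |S|$ for every vertex set $S$; applied to $S=N_{G_L}(I)$ with $I$ an independent set of $G_L$ this gives $|I|\le |N_{G_L}(I)|$. Choosing $I$ to be the points of a central sub-rectangle of the grid --- which is $G_L$-independent once $L$ exceeds its diameter --- and noting that the $G_L$-neighbourhood of such a central point is confined to small regions near the four corners of the grid, a direct count shows $|I|>|N_{G_L}(I)|$ as soon as $L$ exceeds the target bound. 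The size of the central rectangle one is forced to take, together with the corner regions it sees, is exactly what produces $k^2+t^2$ and $(k-t)^2-2t(k-1)+1$ in the denominators.

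Finally I would assemble the pieces. In case (i) the lower and upper bounds coincide, so $\textsc{Weave}$ is optimal. In cases (ii) and (iii) the approximation ratio is $\ell_{\mathrm W}(m,n)/\ell^{*}(m,n)$; substituting the two explicit quantities and simplifying the quotient of square roots gives exactly the displayed radicals, and because numerator and denominator of the ratio both grow like $k^2+t^2$ the ratio tends to $1$, which is the asymptotic-optimality claim. Specializing to $m=3$, $n=4$ gives the value $\frac{\sqrt{10}}{5}$, and a short monotonicity check in $k$ and $t$ confirms that no smaller ratio occurs, so the $3\times4$-grid is the worst case. Running-time linearity is immediate, since the tour is emitted from closed-form index formulas in $O(1)$ time per edge. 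I expect the main obstacle to be the upper bound --- choosing, uniformly over the parity classes, the central configuration that cannot be extended to a Hamiltonian cycle and verifying that the bound it yields is tight in case (i) and within the claimed factor in cases (ii) and (iii); a close second in fiddliness is the Hamiltonicity proof for the weave pattern in the even/even case, where the two central columns and two central rows interlock most delicately.
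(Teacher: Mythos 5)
Your overall architecture matches the paper's: an explicit weave pattern transferring the Arkin et al.\ line order to the columns, a finite minimization over edge shapes to get the lower bound $\sqrt{k^2+t^2}$ ($n$ odd) resp.\ $\sqrt{t^2+(k-1)^2}$ ($n$ even), a geometric upper bound coming from the central vertices and the corners, and a quotient of the two. You are also right to flag single-cycle-ness of the pattern as a genuine obligation; the paper essentially takes it on faith from the successor formulas, so your insistence on proving Hamiltonicity (by induction or by exhibiting the cyclic order in each parity class) is a point where you are more careful than the source.

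The one place where your plan, as written, would not deliver the stated bounds is the upper-bound step. You correctly cite the component condition $c(G_L-S)\le|S|$, but you then weaken it to $|I|\le|N_{G_L}(I)|$ for an independent set $I$ and propose to work only with that count. This weak form suffices when the bound is the \emph{largest} center-to-corner distance (cases $n$ odd, and $n$ even with $m$ odd, where a central vertex becomes isolated once $L$ exceeds $\sqrt{k^2+t^2}$), but it cannot reach the \emph{second-largest} distance $\sqrt{(t-1)^2+k^2}$ needed in the even--even case (ii): there the natural choice of $I$ is the four central vertices, whose $G_L$-neighbourhoods for $L>\sqrt{(t-1)^2+k^2}$ lie in the four corners, giving only $|I|=4=|N_{G_L}(I)|$ --- no contradiction. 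The same tie occurs for $m=2$, where the theorem claims optimality and one must push the bound all the way down to $\sqrt{1+(k-1)^2}$. In both situations you must use the unweakened condition: with $S$ the set of far corners, $G_L-S$ has each central vertex as an isolated component plus at least one further component from the bulk, so $c(G_L-S)\ge|I|+1>|S|$. The paper handles these two situations by different, more elementary means --- the observation that every vertex needs \emph{two} incident tour edges (so the second-longest available edge is forced) for case (ii), and an explicit subtour-elimination argument among the four central vertices and four corners for $m=2$ --- and your component-counting framework does subsume both once you stop discarding the ``$+1$ component from the rest of the graph''. With that repair, and the routine verification that the quotient of the two explicit radicals is minimized at $(m,n)=(3,4)$, your outline goes through.
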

The algorithm is in all cases no worse than a 
$\frac{\sqrt{10}}{5}$-approximation. More specifically, it will turn
out that the additive gap between the objective value of
$\textsc{Weave}(m,n)$ and the optimal value of $\textsc{MSTSP}(m,n)$
is always strictly smaller than one, the length of a shortest edge in
$G(m,n)$.

\section{The Algorithm}
%algorithms.tex 
$\textsc{Weave}(m,n)$ is based on an algorithm for a tour for points
on a line introduced by Arkin et al.~\cite{ACM}. The resulting order of
 the points on a line is exactly transferred to the columns in the equidistant
grid. That is why we will consider this order here in more detail. 
    
There are two subroutines in the algorithm, one for an odd number of
points and one for an even number of points. For an odd number of
points $n=2k+1\ge 3$, the distances between subsequent points in the order are
either $k$ or $k+1$, see the left two parts of
Fig.~\ref{fig:order_line}. So the order of the points in the tour is:
$1,k+2,2,k+3,\ldots,n,k+1$. The tour is completed by returning to the
first point. For further use, we call this procedure
$\textsc{LineOdd}(n)$. 
\begin{figure}
  \includegraphics[scale=0.4]{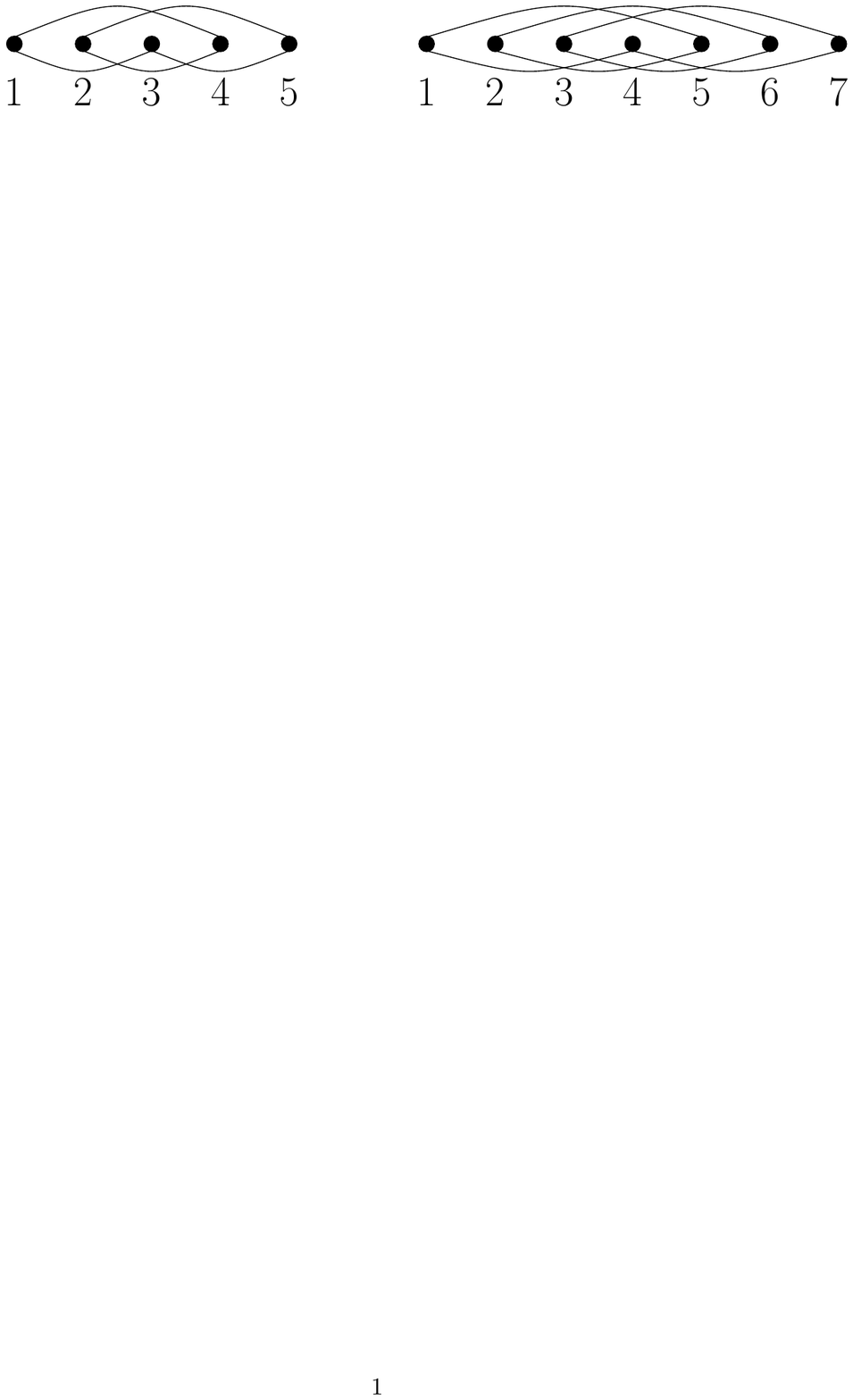} \quad
  \includegraphics[scale=0.4]{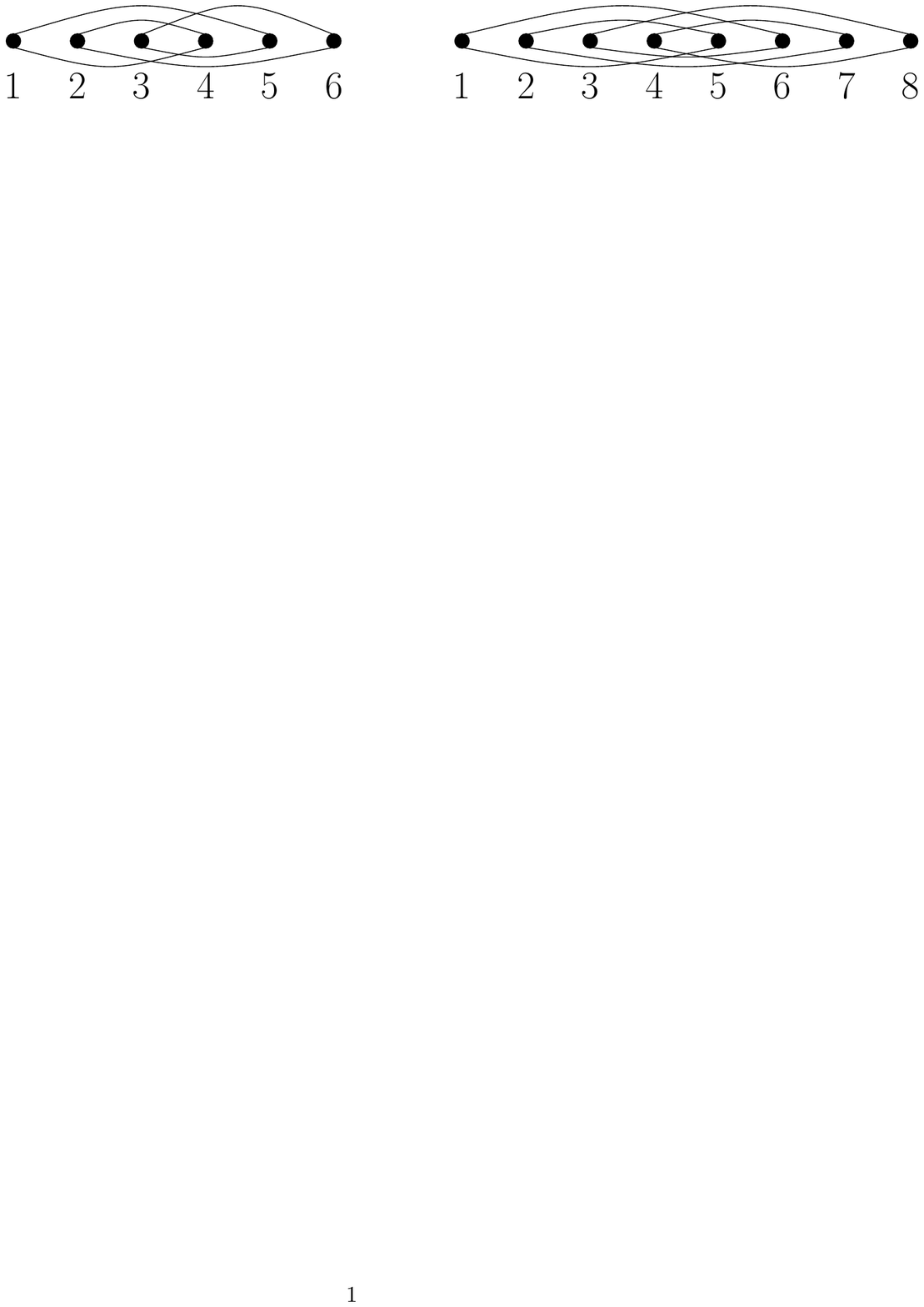} 
  \caption{Order of $\textsc{LineOdd}(n)$ and $\textsc{LineEven}(n)$ for n=5,6,7,8}
  \label{fig:order_line}
\end{figure}
For an even number of points $n=2k\ge 4$, the distances are chosen to be
alternatingly $k-1$ and $k+1$. At the points at the end of the line and at the
 two closest points to the center of the line the distance of $k$ is used as well. So the
distance of $k$ is filled in between the alternating distances when
reaching the endpoints or the center, see the last two components of
Fig.~\ref{fig:order_line}. Equally one can say, the end points have
distances of $k$ and $k+1$ and the two points around the center have
distances of $k-1$ and $k$. This procedure is called
$\textsc{LineEven}(n)$ in the following.  If $k$ is odd, the order of
the points in the tour is
$1,k+2,3,k+4,\ldots,n-1,k,n,k-1,\ldots,k+1$. 
If $k$ is even, the order
of the points in the tour is
$1,k+2,3,k+4,\ldots,k-1,n,k,n-1,\ldots,k+1$. 
From Arkin et
al.~\cite{ACM} (Sec. 6.1 and 6.2) it follows 
that the tours generated by $\textsc{LineOdd}(n)$ and
$\textsc{LineEven}(n)$ are optimal tours for 
$\textsc{MSTSP}(1, n)$.  Thus, $\textsc{Weave}(1,n)$ is set to 
$\textsc{LineOdd}(n)$ or $\textsc{LineEven}(n)$, respectively.

In the following we present the new algorithm for $m \ge 2$.  The
points in the grid are labeled by $(i,j)$, where $i = 1, 2, \dots, m$
is the row index and $j=1,\ldots , n$ is the column index.  The simple
idea to apply $\textsc{Weave}(1,n)$ row by row yields $k-1$ as the
shortest edge length. However, this can be improved by using the
freedom in the second dimension. One can increase the length of the
shortest edge in $\textsc{Weave}(1,n)$ by using the order of nodes in
both dimensions at the same time. However, this yields no closed tour.
Thus, we need a new pattern in the second dimension.

To this end, the rows are partitioned in pairs for $n$ even and pairs
plus one triple for $n$ odd.  The paired rows are at distance~$t$, as
are the rows in the triple. The idea is to jump back and forth inside
a pair (triple) of rows where the order of columns is given by
$\textsc{Weave}(1,n)$. The successor node for a node always lies in 
the paired (tripled) row to the current node.  For example, for odd $n$ 
this is done until all nodes in the pair have been visited, whereas for 
even $n$ this is done for half of the nodes. Then we traverse into the 
next pair.  Since switching to the next pair yields a smaller vertical 
distance, we start in a column such that these pair-switches are 
accompanied by the largest possible horizontal distances $k+1$ 
in $\textsc{Weave}(1,n)$. Thus, $\textsc{Weave}(m,n)$ starts each pair 
in column $k+2$. The setup is illustrated in Fig.~\ref{fig:princ_alg}. 
\begin{figure}
  \centering
  \includegraphics[scale=0.55]{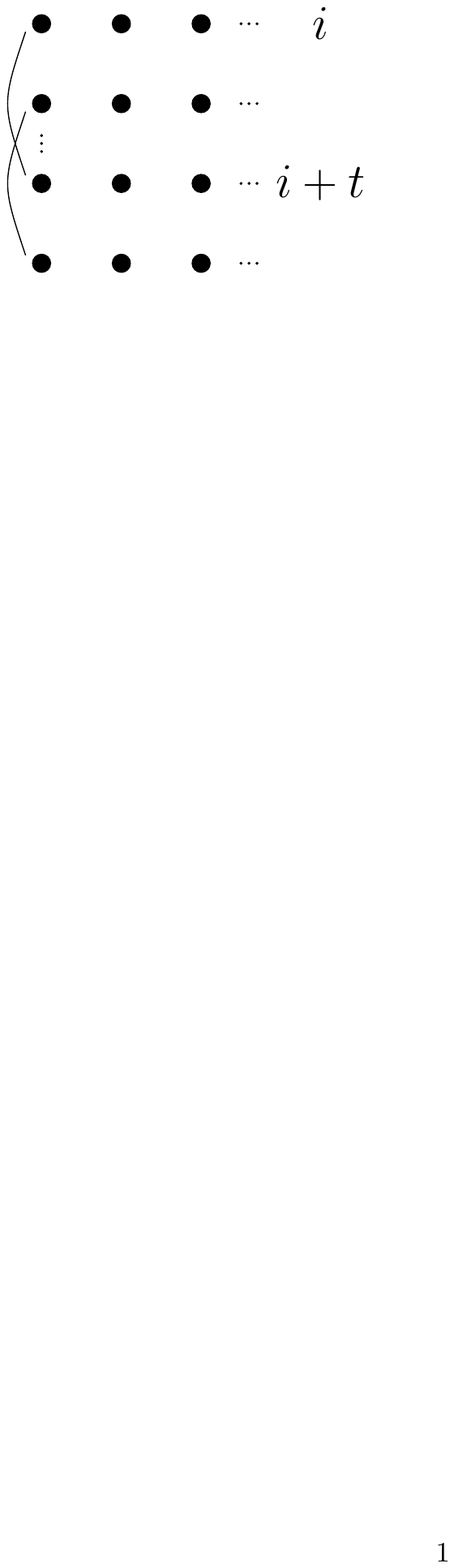} \qquad
  \includegraphics[scale=0.55]{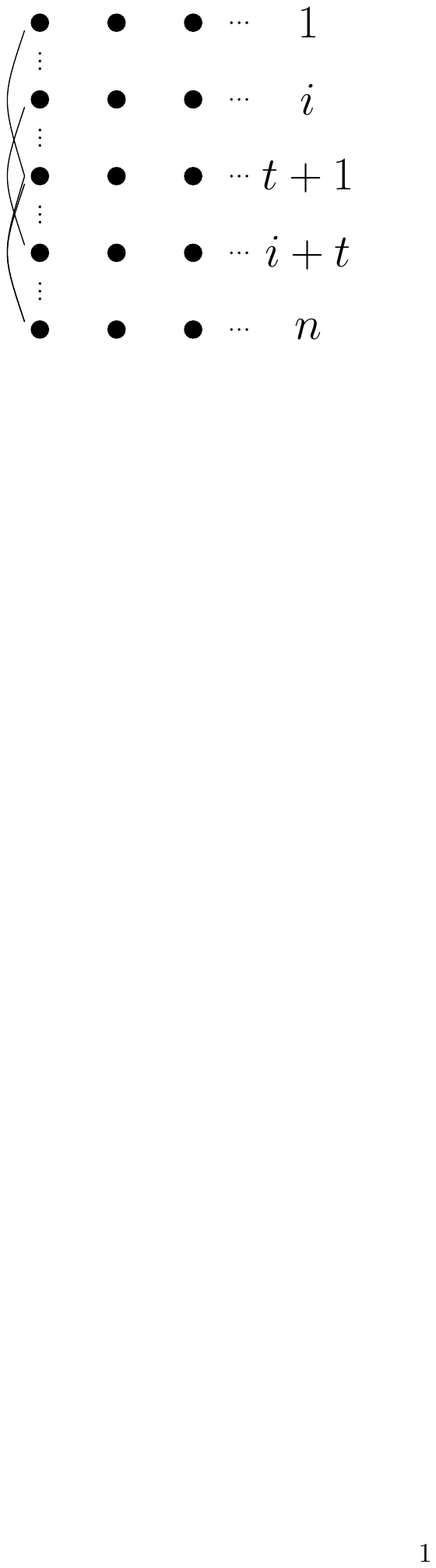} \qquad
  \includegraphics[scale=0.55]{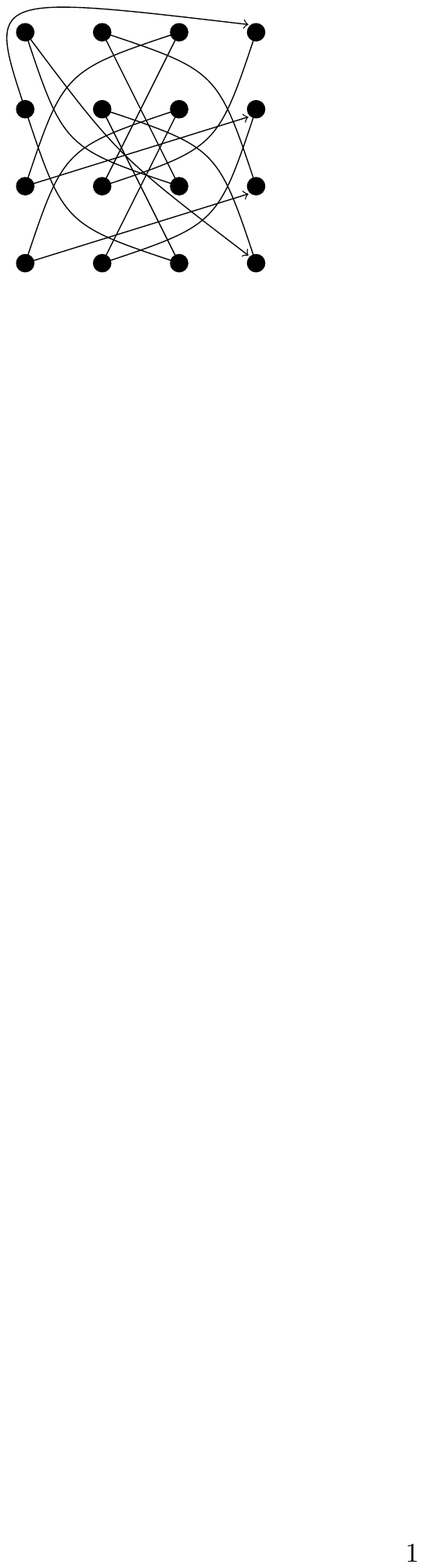} 
  \caption{Pairs of rows for an even number, for an odd number of
    rows, and an example of a ($4\times 4$)-grid}
\label{fig:princ_alg}
\end{figure}
 
We first describe $\textsc{Weave}(m,n)$ in a way that the construction
principle becomes apparent.  First, let $m$ be even.

For an odd number of columns the algorithm works in the following way,
where in the following the order of column indices in each line is
given by $\textsc{Weave}(1,n)$:
\begin{align}
  (1,k+2),(t+1,2),(1,k+3),(t+1,3), &\ldots,(1,n),(t+1,k+1),(1,1),\notag\\
  (t+1,k+2),(1,2),(t+1,k+3),(1,3), &\ldots,(t+1,n),(1,k+1),(t+1,1),\notag\\
  (2,k+2),(t+2,2),(2,k+3),(t+2,3), &\ldots,(2,n),(t+2,k+1),(2,1),\notag\\
  (t+2,k+2),(2,2),(t+2,k+3),(2,3), &\ldots,(t+2,n),(2,k+1),(t+2,1),\notag\\
                                   &\ldots,\notag\\
  (t,k+2),(m,2),(t,k+3),(m,3), &\ldots,(t,n),(m,k+1),(t,1),\notag\\
  (m,k+2),(t,2),(m,k+3),(t,3), &\ldots,(m,n),(t,k+1),(m,1),(1, k+2).
\end{align}

For an even number of columns this procedure fails because it
generates a subtour inside a pair of rows.  This can be rectified by
switching pairs after each completed order of $\textsc{Weave}(1,n)$.
The resulting order of nodes is as follows (again, the order of
columns in each line is given by $\textsc{Weave}(1,n)$):
\begin{align}
  (1,k+2),(t+1,3),(1,k+4),(t+1,5), &\ldots,(t+1,2),(1,k+1),(t+1,1),\notag\\
  (2,k+2),(t+2,3),(2,k+4),(t+2,5),&\ldots,(t+2,2),(2,k+1),(2,1),\notag\\
                                  &\ldots,\notag\\
  (t+1,k+2),(1,3),(t+1,k+4),(1,5),&\ldots,(t+1,2),(1,k+1),(t+1,1),\notag\\
  (t+2,k+2),(2,3),(t+2,k+4),(2,5),&\ldots,(t+2,2),(2,k+1),(t+2,1),\notag\\
                                  &\ldots,\notag\\
  (m,k+2),(t,3),(m,k+4),(t,5),&\ldots,(m,2),(t,k+1),(m,1),(1,k+2).
\end{align}

Let now $m$ be odd. First, the triple $(1,t+1,m)$ of rows is
taken. 
If $n$ is not divisible by $3$, then the rows are visited in the order
$(1,t+1,m,1,t+1,m,\ldots)$, $(t+1,m,1,t+1,m,1,\ldots)$,
$(m,1,t+1,m,1,t+1,\ldots)$ or $(1,t+1,m,1,t+1,m,\ldots)$,
$(m,1,t+1,m,1,t+1,\ldots)$, $(t+1,m,1,t+1,m,1,\ldots)$, depending
on~$n$.  During this process, the columns are visited in the order of
$\textsc{Weave}(1,n)$ three times consecutively. 

If $n$ is a multiple of~$3$ the row order pattern is always
$(1,t+1,m,1,t+1,m,\ldots)$, $(m,1,t+1,m,1,t+1,\ldots)$,
$(t+1,m,1,t+1,m,1,\ldots)$. After the row triple has been traversed
completely, the procedure is the same as for even~$m$.

% Pseudocode
We now define for $\textsc{Weave}(m,n)$ the successor node
$\bigl(\HKRnextrow(i,j), \HKRnextcol(i,j)\bigr)$ of each node $(i,j)$
in every possible case. 
Together with the starting node $(1,k+2)$ this yields a complete definition of our
algorithm. Recall that $t= m-t$ if $m$ is even and $t= m-1-t$ if $m$ is odd. 

We start with the successor column of $\HKRnextcol(i,j)$ when $n$ is
odd:
\begin{equation*}
  \HKRnextcol(i,j) := \begin{cases}  
    j+k+1&\text{if } j\leq k \\  
    j-k &\text{if } j> k. 
  \end{cases}
\end{equation*}
The successor column when $n$ is even and $k$ is even:
\begin{equation*}
  \HKRnextcol(i,j) := \begin{cases}  j+k+1&\mbox{if } j\leq k \mbox{ and } j \ \mbox{ odd} \\ 
    j+k-1&\mbox{if } j\leq k \mbox{ and } j \mbox{ even} \\ 
    j-(k+1)&\mbox{if }n> j> k+1 \mbox{ and } j \ \mbox{ odd} \\ 
    j-(k-1)&\mbox{if } n> j> k+1 \mbox{ and } j \mbox{ even} \\ 
    j-k&\mbox{if }	j=k+1 \mbox{ or } j=n,
  \end{cases}
\end{equation*}
The successor column when $n$ is even and $k$ is odd: 
\begin{equation*}
  \HKRnextcol(i,j) := \begin{cases}  j+k+1&\mbox{if } j< k \mbox{ and } j \  \mbox{ odd} \\ 
    j+k-1&\mbox{if } j< k \mbox{ and } j \mbox{ even} \\ 
    j-(k+1)&\mbox{if } j> k+1 \mbox{ and } j \mbox{ even} \\ 
    j-(k-1)&\mbox{if } j> k+1 \mbox{ and } j \ \mbox{ odd} \\ 
    j+k&\mbox{if } j=k \\
    j-k&\mbox{if }j=k+1.
  \end{cases}
\end{equation*}
Let now $m,n >3$.
The successor row $\HKRnextrow(i,j)$ of $(i,j)$ when $n$ is odd [even]
and $m$ is even:
\begin{equation*}
  \HKRnextrow(i,j) := \begin{cases} 
    i+t &\mbox{if } i\leq t  \bigl[\mbox{and } j\neq 1\bigr]  \\
    i-t &\mbox{if } i> t \mbox{ and }  j\neq 1 \\
    \bigl[i+(t+1)&\mbox{if } i< t \mbox{ and } j=1 \bigr] \\ 
    i-(t-1) &\mbox{if } m>i> t \mbox{ and } j=1 \bigl[\mbox{or } (i=m \mbox{ and } j=1)\bigr] \\
    \mbox{TERMINATE} &\mbox{if } i=m\bigl[-t \bigr] \mbox{ and } j=1\\ 
  \end{cases}
\end{equation*}
The successor row when $n$ is odd [even], $m$ is odd, and $n \bmod{3}\neq 0$:
\begin{equation*}
  \HKRnextrow(i,j) := \begin{cases} 
    i+t &\mbox{if } i\leq t+1 \bigl[\mbox{and } j \neq 1 \mbox{ or } \\
    &((i=1 \mbox{ or } i=t+1) \mbox{ and }j= 1)\bigr]\\
    i-t &\mbox{if } m>i> t+1 \mbox{ and }  j\neq 1 \\
    1&\mbox{if } i=m \mbox{ and } j\neq 1 \\ 
    \bigl[i+(t+1)&\mbox{if } 1< i< t \mbox{ and } j= 1 \bigr] \\ 
     i-(t-1)&\mbox{if } m-1>i> t+1\mbox{ and }  j= 1\\ 
     [i-(t-2)&\mbox{if } i=m-1 \mbox{ and }  j= 1]\\
    2&\mbox{if } i=m \mbox{ and } j=1 \\
    \mbox{TERMINATE} &\mbox{if } i=m-1\bigl[-t\bigr] \mbox{ and } j=1
  \end{cases}
\end{equation*}
The successor row when $n$ is odd [even], $m$ is odd, and $n \bmod{3}= 0$:
\enlargethispage{2cm}
\begin{equation*}
  \HKRnextrow(i,j) := \begin{cases} 
    i+t &\mbox{if } 1<i\leq t  \bigl[\mbox{and } j\neq 1\bigr]\\ 
    i+t &\mbox{if } (i=1 \mbox{ or } i= t+1) \mbox{ and } j\neq 1 \\
    i-t &\mbox{if } m>i> t+1 \mbox{ and }  j\neq 1 \\
    1&\mbox{if } i=m \mbox{ and } j\neq 1 \\ 
    m &\mbox{if } i=1  \mbox{ and } j=1 \\
    t+1 &\mbox{if } i=m  \mbox{ and } j=1 \\
    \bigl[i+(t+1)&\mbox{if } 1< i< t \mbox{ and }  j= 1\bigr] \\
    i-(t-1)&\mbox{if } m-1>i\geq t+1 \mbox{ and } j= 1\\ 
    [i-(t-2)&\mbox{if } i=m-1 \mbox{ and }  j= 1]\\ 
    \mbox{TERMINATE} &\mbox{if } i=m-1\bigl[-t\bigr] \mbox{ and } j=1.
  \end{cases}
\end{equation*}
If $m=3$, then $\textsc{Weave}(m,n)$ terminates when for the first
time the triple of rows is finished.

\section{Optimality and Gaps}
\label{sec:3}
% Optimality and Gaps
We now analyze the edge lengths appearing in the tours of the
algorithm, in particular, the shorter ones to get a lower bound for a
solution to the Maximum Scatter TSP. After that, we have a look at the
upper bounds for the longest possible shortest edge in a
tour. 

By comparing upper bounds and lower bounds we can prove that 
$\textsc{Weave}(m,n)$ is optimal for any grid with an odd number of 
columns, for any quadratic grid and for any grid with two rows. 
In the following, we identify the Euclidean lengths of candidates for the
 shortest edge in a tour computed by $\textsc{Weave}(m,n)$.
 
For odd $n$ the relevant edge lengths are $\sqrt{k^2+t^2}$ and 
$\sqrt{(k+1)^2+(t-1)^2}$. For $n$ even the lengths are $\sqrt{t^2+(k-1)^2}$
 and $\sqrt{(t-1)^2+(k+1)^2}$. A straight-forward
calculation shows: $t^2+k^2 \leq (t-1)^2+(k+1)^2$ and
$t^2+(k-1)^2 \leq (t-1)^2+(k+1)^2$, so the lengths of the shortest
edges of $\textsc{Weave}(m,n)$ are $\sqrt{k^2+t^2}$ for $n$ odd and
$\sqrt{t^2+(k-1)^2}$ for $n$ even. 

An upper bound for the shortest edge can be derived by considering the
distances between the nodes in the center of the grid and the corner
nodes. Each node has to be connected to exactly two other nodes in a
tour.  This holds, in particular, for the most central nodes. The
nodes with the longest distances to them are the corners of the
grid. If $m$
and $n$
are odd, there is exactly one central node, and the distance to each
of the four corners is $\sqrt{k^2+t^2}$.
If $m$
is even, there are still two edges whose distance to the middle node
is $\sqrt{k^2+t^2}$.
So $\textsc{Weave}(m,n)$
is optimal whenever $n$
odd as its shortest edge has length $\sqrt{k^2+t^2}$.
If $n$
is even, there are two central nodes. For $m$
odd, the longest distance appears twice and is $\sqrt{t^2+k^2}$.
As the longest distance to a corner appears only once if $m$
is even, the second longest distance has to appear in any tour. Thus,
$\sqrt{k^2+(t-1)^2}$
is an upper bound, too.  For even $n$,
there is still a gap between the length of the shortest edge of the
tour from $\textsc{Weave}(m,n)$
and the current upper bound. The absolute value of the gap between
lower and upper bound is $\sqrt{t^2+k^2}-\sqrt{t^2+(k-1)^2}$
if $m$
is odd and $\sqrt{k^2+(t-1)^2}-
\sqrt{t^2+(k-1)^2}$ if
$m$
is even, respectively. But we can show with Lemma~\ref{thm:est_gap}
that this gap is smaller than the minimal distance in the grid.

\begin{lemma}
  For any regular grid with even $n$,
  the gap between the solution of $\textsc{Weave}(m,n)$
  and the upper bound of the Maximum Scatter TSP deduced from the
  distances between corner and central points is always smaller than
  one. 
  \label{thm:est_gap}
\end{lemma}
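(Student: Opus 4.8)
The plan is to take the two gap expressions that were just isolated before the lemma and show that each is strictly below $1$ by elementary estimates, being careful with signs so that each squaring stays an equivalence. Throughout I would assume $m\ge 2$: for $m=1$ the grid is a line and $\textsc{Weave}(1,n)=\textsc{LineEven}(n)$ is optimal and treated separately (and indeed, the ``$m$ odd'' gap formula below would give exactly $1$ for $m=1$, which is why that case must be excluded). Under $m\ge 2$ one has $t=\lfloor m/2\rfloor\ge 1$, and since $m\le n$ also $k=\lfloor n/2\rfloor\ge t$. These two facts, $t\ge 1$ and $k\ge t$, are the only input about $m$ and $n$ the proof needs.

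\emph{Case $m$ odd.} Here the gap is $\sqrt{t^2+k^2}-\sqrt{t^2+(k-1)^2}$, so the assertion is $\sqrt{t^2+k^2}<\sqrt{t^2+(k-1)^2}+1$. Both sides are positive; squaring, cancelling $t^2$, and simplifying turns this into $k-1<\sqrt{t^2+(k-1)^2}$, and since $k-1\ge 0$ one may square once more to reach $0<t^2$, which holds because $t\ge 1$.

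\emph{Case $m$ even.} Here the gap is $\sqrt{k^2+(t-1)^2}-\sqrt{t^2+(k-1)^2}$, which is nonnegative precisely because $k\ge t$. The claim $\sqrt{k^2+(t-1)^2}<\sqrt{t^2+(k-1)^2}+1$, after one squaring and simplification, reads $2k-2t-1<2\sqrt{t^2+(k-1)^2}$. If $k=t$ the left-hand side equals $-1<0$ and we are done; otherwise $k\ge t+1$, both sides are nonnegative, and a second squaring reduces the claim to $4k+4t-8kt<3$. Writing the left side as $4k(1-2t)+4t$ and using $t\ge1$ (so $1-2t\le-1$) gives $4k(1-2t)+4t\le -4k+4t=-4(k-t)\le-4<3$, which finishes the argument.

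The proof is routine, and the one point that needs real attention is the sign bookkeeping at each squaring step: squaring only preserves the inequality when both sides are nonnegative, and that is exactly why the even case has to split into $k=t$ and $k\ge t+1$. I do not expect a genuine obstacle beyond this. In fact the final estimate $-4(k-t)\le -4<3$ shows the bound $<1$ is far from tight, so Lemma~\ref{thm:est_gap} could even be sharpened — but $<1$ is all that is needed to justify the additive-gap statement following Theorem~\ref{thm:main_thm}.
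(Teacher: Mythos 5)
Your proof is correct and follows essentially the same route as the paper: reduce each gap bound to an elementary inequality by squaring (with the same sign checks), ending at $0<t^2$. The only difference is that the paper dispatches the even-$m$ case in one line by noting $\sqrt{(t-1)^2+k^2}<\sqrt{t^2+k^2}$, so the single inequality $\sqrt{t^2+k^2}<\sqrt{t^2+(k-1)^2}+1$ covers both cases, whereas you verify the even-$m$ inequality directly with a second squaring — slightly longer, but equally valid.
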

\begin{proof}
  As the upper bound for even $m$
  is smaller than the upper bound for odd $m$
  (recall that $\sqrt{(t-1)^2+k^2}
  < \sqrt{t^2+k^2}$), it suffices to prove $\sqrt{t^2+k^2}<
  \sqrt{t^2+(k-1)^2}+1$. We can square both sides of the inequality,
  because all the values are greater zero:
  $t^2+k^2
  \stackrel{}{>} t^2+(k-1)^2+2\sqrt{t^2+(k-1)^2}+1\Leftrightarrow 
  0\stackrel{}{>} -2k+2+2\sqrt{t^2+(k-1)^2} \Leftrightarrow 
  k-1\stackrel{}{>}\sqrt{t^2+(k-1)^2}$. 
  By observing that $t^2
  >0$ for
  $m>1$ on the right hand side, the strict inequality is proved. 
  So the value of the gap between the lower and the upper bound is
  always strictly smaller than one.
\end{proof}
If the grid is quadratic, the number of columns and rows has to be
both odd or both even. As $m=n$
and therefore $t=k$,
the gap for even $m$
and $n$
vanishes. So $\textsc{Weave}(m,n)$ is optimal for quadratic grids. 

When the number of rows is two, we claim that $\textsc{Weave}(2,n)$ is
optimal for all numbers of columns. For odd~$n$ this was already
proven. In the following we prove this for the case of even~$n$.
Assume that some tour achieves the upper bound, then each central node
would have to be connected to both of its farthest corners.  But in
this case -- as there are four central nodes -- the edges connecting
central and corner nodes violate the subtour elimination inequality: a
contradiction. Thus, the upper bound cannot be achieved by a feasible
tour, and each tour must employ at least one shorter edge incident to
a central node. The size of this new bound is $\sqrt{1+(k-1)^2}$.
Since this equals the lower bound achieved by $\textsc{Weave}(2,n)$,
the claim is proved.
  
Calculating a bound for the approximation factor by the division of
the approximate solution and the upper bound for the shortest edge, we
can summarize the quality %????
of $\textsc{Weave}(m,n)$ in the following proposition.
\begin{proposition}
  $\textsc{Weave}(m,n)$ is optimal whenever $n$ is odd, or $m=2$, or $m=n$.\\
  For even $n$ and odd $m$ we have $\textsc{Weave}(m,n)$
  $\geq
  \underbrace{\sqrt{1-\frac{2k-1}{t^2+k^2}}}_{\alpha_{t,k}}\cdot$
  OPT(m,n).  For even $m,n$ we have $\textsc{Weave}(m,n)$
  $\geq
  \underbrace{\sqrt{1-\frac{2(k-t)}{(k-t)^2+2t(k-1)+1}}}_{\alpha_{t,k}}\cdot$
  OPT(m,n).
\end{proposition}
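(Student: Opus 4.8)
The plan is to combine the two estimates already assembled in Section~\ref{sec:3}. On the one hand, the objective value of the tour returned by $\textsc{Weave}(m,n)$ is bounded below by the length of its shortest edge, which was identified as $\sqrt{k^2+t^2}$ when $n$ is odd and as $\sqrt{t^2+(k-1)^2}$ when $n$ is even. On the other hand, the corner-to-center argument gives the upper bounds $\mathrm{OPT}(m,n)\le\sqrt{k^2+t^2}$ whenever $n$ is odd, $\mathrm{OPT}(m,n)\le\sqrt{k^2+t^2}$ when $n$ is even and $m$ is odd, and $\mathrm{OPT}(m,n)\le\sqrt{k^2+(t-1)^2}$ when $m$ and $n$ are both even; for $m=2$ and even $n$ the sharpened subtour-elimination bound $\mathrm{OPT}(2,n)\le\sqrt{1+(k-1)^2}$ is available. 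The approximation factor is then simply the quotient of the lower bound by the relevant upper bound, so the whole argument reduces to elementary manipulation of these radicals.

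First I would dispose of the optimality claims. If $n$ is odd, the lower bound and the upper bound are both $\sqrt{k^2+t^2}$, so $\textsc{Weave}(m,n)=\mathrm{OPT}(m,n)$. If $m=n$ then $t=k$; for odd $n$ this is the previous case, and for even $n$ the upper bound $\sqrt{k^2+(t-1)^2}$ and the lower bound $\sqrt{t^2+(k-1)^2}$ coincide, so the tour is again optimal. If $m=2$ and $n$ is even, then $t=1$ and the sharpened bound $\sqrt{1+(k-1)^2}$ equals the lower bound $\sqrt{t^2+(k-1)^2}$; for $m=2$ and odd $n$ optimality is the first case. This settles the first sentence of the proposition.

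For the two approximation bounds I would square the quotients. When $n$ is even and $m$ is odd,
\[
\frac{\textsc{Weave}(m,n)^2}{\mathrm{OPT}(m,n)^2}\ \ge\ \frac{t^2+(k-1)^2}{t^2+k^2}\ =\ 1-\frac{k^2-(k-1)^2}{t^2+k^2}\ =\ 1-\frac{2k-1}{t^2+k^2}\ =\ \alpha_{t,k}^2 .
\]
When $m$ and $n$ are both even,
\[
\frac{\textsc{Weave}(m,n)^2}{\mathrm{OPT}(m,n)^2}\ \ge\ \frac{t^2+(k-1)^2}{k^2+(t-1)^2}\ =\ 1-\frac{\bigl(k^2+(t-1)^2\bigr)-\bigl(t^2+(k-1)^2\bigr)}{k^2+(t-1)^2}\ =\ 1-\frac{2(k-t)}{k^2+(t-1)^2},
\]
and the identity $k^2+(t-1)^2=(k-t)^2+2t(k-1)+1$ (both sides equal $k^2+t^2-2t+1$) turns the denominator into the one appearing in the statement, so the ratio squared is exactly $\alpha_{t,k}^2$. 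Taking positive square roots and multiplying by $\mathrm{OPT}(m,n)$ gives the claimed inequalities. Along the way I would record that $\alpha_{t,k}\in(0,1]$: positivity holds because $t^2+(k-1)^2>0$ for all grids under consideration, and $\alpha_{t,k}\le1$ because $k\ge t\ge1$ forces $2k-1\ge0$ and $k-t\ge0$, so that these are genuine approximation guarantees.

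The computations are routine; the part that needs care is the bookkeeping over which upper bound applies and the verification that the general formulas really do cover the extreme small grids rather than hiding an exception. Concretely, I would check $m=3$, $n=4$ (so $k=2$, $t=1$, an odd-$m$/even-$n$ instance), where the formula yields $\alpha_{1,2}=\sqrt{1-\tfrac{3}{5}}=\sqrt{2/5}=\tfrac{\sqrt{10}}{5}$, matching the worst-case claim announced in the abstract, and likewise confirm that the other boundary cases $m=3$ with $n$ odd, $m=2$, and $m=n$ fall under the optimality cases treated above. I would also note that the multiplicative guarantee here is a separate statement from the additive bound of Lemma~\ref{thm:est_gap}, the two being derived from the same pair of bounds.
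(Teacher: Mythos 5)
Your proposal is correct and follows essentially the same route as the paper: it forms the quotient of the shortest Weave edge ($\sqrt{k^2+t^2}$ for odd $n$, $\sqrt{t^2+(k-1)^2}$ for even $n$) by the corner-to-center upper bound in each parity case, uses the sharpened subtour-elimination bound $\sqrt{1+(k-1)^2}$ for $m=2$, and verifies the algebraic identity $k^2+(t-1)^2=(k-t)^2+2t(k-1)+1$ behind the stated $\alpha_{t,k}$. The only difference is that you spell out the radical manipulations and the sanity check at $m=3$, $n=4$ explicitly, which the paper leaves implicit.
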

Since the approximation guarantee converges to one for increasing $m$
and $n$, we conclude the following:
\begin{corollary}
  In all cases, $\alpha \geq \frac{\sqrt{10}}{5}$. Moreover,
  $\lim_{k\rightarrow \infty} \alpha_{t,k} = 1$ for all %%forall
  $t$ and \\
  $\lim_{t\rightarrow \infty,k\rightarrow \infty} \alpha_{t,k} = 1$.
  In this sense, $\textsc{Weave}(m,n)$ is asymptotically optimal.
\end{corollary}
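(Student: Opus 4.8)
The plan is to handle the three assertions in turn, in each case reducing to the two closed-form expressions for $\alpha_{t,k}$ from the preceding Proposition. First I would set aside the grids on which $\textsc{Weave}(m,n)$ is already known to be optimal --- $n$ odd, $m=2$, or $m=n$ --- since there $\alpha=1$, which both exceeds $\frac{\sqrt{10}}{5}$ and is the asserted limit. For every remaining grid $n$ is even, and it is convenient to rewrite $\frac{\sqrt{10}}{5}=\sqrt{\tfrac{2}{5}}$: the inequality $\alpha_{t,k}\ge\frac{\sqrt{10}}{5}$ is then equivalent to showing that the fraction subtracted from $1$ inside $\alpha_{t,k}^2$ is at most $\tfrac{3}{5}$.

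For $n$ even and $m$ odd I would first pin down the admissible domain. Since $m\le n$ with $m$ odd and $n$ even forces $n\ge m+1$, we get $k\ge t+1$, and $t\ge1$ because $m\ge3$. The fraction $\tfrac{2k-1}{t^2+k^2}$ is strictly decreasing in $t$, so over the admissible domain it is largest at $t=1$, where it equals $\tfrac{2k-1}{k^2+1}$. A short derivative computation shows that the real function $k\mapsto\tfrac{2k-1}{k^2+1}$ attains its maximum at $k=\tfrac{1+\sqrt5}{2}<2$ and is strictly decreasing on $[2,\infty)$, so over the integers $k\ge2$ its maximum is $\tfrac{3}{5}$, attained at $k=2$. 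Hence the subtracted fraction is at most $\tfrac{3}{5}$, with equality exactly at $(t,k)=(1,2)$, i.e. the $3\times4$-grid, yielding $\alpha_{t,k}\ge\frac{\sqrt{10}}{5}$ and identifying the worst case.

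For $n$ even and $m$ even I would assume $m<n$ (otherwise $m=n$ and $\alpha=1$), so $1\le t<k$. Writing the denominator as $k^2+(t-1)^2$, it is strictly increasing in $t$, so the subtracted fraction $\tfrac{2(k-t)}{k^2+(t-1)^2}$ is maximized at $t=1$, where it equals $\tfrac{2(k-1)}{k^2}$; over integers $k\ge2$ this is at most $\tfrac12<\tfrac35$. Thus $\alpha_{t,k}\ge\tfrac{1}{\sqrt2}>\frac{\sqrt{10}}{5}$ on all even--even grids, so they are never the bottleneck, and combining the three cases gives $\alpha\ge\frac{\sqrt{10}}{5}$ in general. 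For the limit statements I would note that in both formulas the subtracted fraction is at most $\tfrac{2k}{k^2}=\tfrac{2}{k}$ --- its numerator is at most $2k$ and, using $t\le k$, its denominator is at least $k^2$ --- so it tends to $0$ as $k\to\infty$ whether $t$ is held fixed or also tends to infinity; hence $\alpha_{t,k}\to1$ in both regimes, which is asymptotic optimality.

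I expect the only delicate step to be the discrete optimization in the odd-$m$ case: one must verify that the unconstrained maximizer of $k\mapsto\tfrac{2k-1}{k^2+1}$ lies strictly below the smallest admissible integer $k=2$, so that the maximum over the integer domain is attained at the boundary and equals precisely $\tfrac35$. Getting the admissible domain right --- in particular the parity-forced inequality $k\ge t+1$, which rules out the degenerate possibilities and legitimizes the boundary argument --- is the part that needs care; everything else is monotonicity in one variable followed by a one-line calculus check.
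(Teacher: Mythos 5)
Your proposal is correct and follows the only natural route — bounding the subtracted fractions in the two closed-form expressions for $\alpha_{t,k}$ over their admissible integer domains — which is exactly the (unwritten) computation behind the paper's one-line justification; you in fact supply more detail than the paper does, correctly pinning down $k\ge t+1$ from the parity constraint, locating the worst case at $(t,k)=(1,2)$ (the $3\times4$-grid, matching the abstract), and verifying both limit regimes via the uniform bound $2/k$. One small point worth noting: your simplification of the even--even denominator to $k^2+(t-1)^2$ uses the Proposition's form $(k-t)^2+2t(k-1)+1$, which is the correct one (Theorem~\ref{thm:main_thm}(ii) carries a sign typo in that denominator).
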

Since $\textsc{Weave}(m,n)$ works by a fixed formula for each edge, it
is a linear-time algorithm. From this, Theorem~\ref{thm:main_thm}
follows.

\section{Conclusion}
\label{sec:4}
%conclusion

We have presented an asymptotically optimal linear-time algorithm
$\textsc{Weave}(m,n)$ for the Maximum Scatter TSP for the complete
graph on the points in the regular $m \times n$-grid in the plane.
This is the first polynomial time algorithm that solves an infinite
class of two-dimensional Maximum Scatter TSP in the plane to
optimality.  The complexity of the general Maximum Scatter TSP in the
plane remains open.

\end{document}